\newtheorem{theorem}{Theorem}
\begin{document}

\title{Selection of the Speed Command Distance for Improved Performance of a Rule-Based VSL and Lane Change Control
\thanks{This work has been supported by the METRANS Transportation Center under the following grants: Pacific Southwest Region 9 University Transportation Center (USDOT/Caltrans), the National Center for Sustainable Transportation (USDOT/Caltrans) and the South Coast AQMD.}}

\author{Tianchen Yuan, Faisal Alasiri, and Petros A. Ioannou~\IEEEmembership{Fellow,~IEEE}
\thanks{T. Yuan, F. Alasiri and P. Ioannou are with the Department of Electrical Engineering, University of Southern California, Los Angeles, CA, 90089 USA. {\tt\small E-mail: tianchey@usc.edu, alasiri@usc.edu, ioannou@usc.edu}}}

% make the title area
\maketitle

\begin{abstract}
Variable Speed Limit (VSL) control has been one of the most popular techniques with the potential of smoothing traffic flow, maximizing throughput at bottlenecks, and improving mobility and safety. Despite the substantial research efforts in the application of VSL control, few studies have looked into the effect of the VSL sign distance from the point of an accident or a bottleneck. In this paper, we show that this distance has a significant impact on the effectiveness and performance of VSL control. We propose a rule-based VSL strategy that matches the outflow of the upstream VSL zone with the bottleneck capacity based on a multi-section Cell Transmission Model (CTM). Then, we consider the distance of the upstream VSL zone as a control variable and perform a comprehensive analysis of its impact on the performance of the closed-loop traffic control system based on the multi-section CTM. We develop a lower bound that this distance needs to satisfy in order to guarantee homogeneous traffic density across sections and reduce bottleneck congestion. The bound is verified analytically and demonstrated using microscopic simulation of traffic on I-710 in Southern California. The simulations are used to quantify the benefits on mobility, safety and emissions obtained by selecting the upstream VSL zone distance to satisfy the analytical lower bound. The developed lower bound is a design tool which can be used to tune and improve the performance of VSL controllers. 
\end{abstract}

% Note that keywords are not normally used for peerreview papers.
\begin{IEEEkeywords}
integrated traffic control, variable speed limit, lane change control, multi-section cell transmission model, VSL zone distance.
\end{IEEEkeywords}

\section{Introduction}

\IEEEPARstart{F}{reeway} bottlenecks often caused by lane(s) drop, ramp merging or slow vehicles negatively affect traffic mobility and safety. Traffic flow control techniques such as variable speed limit (VSL) control, lane change (LC) control and ramp metering (RM), have been proposed as effective approaches to mitigate bottleneck congestion \cite{zhang2018integrated,zhang2017combined,jin2015control,li2017reinforcement,guo2020integrated,frejo2020logic}. 

The VSL controller regulates the mainstream traffic flow via speed limit commands in order to protect a freeway section from becoming congested by maximizing its throughput. The benefits of VSL control on traffic mobility have been verified by many past research efforts primarily via macroscopic simulations \cite{hegyi2008specialist,carlson2010optimal,frejo2019macroscopic}. However, inconsistent improvements have also been reported for different traffic scenarios in microscopic simulations and field tests \cite{kejun2008model,hadiuzzaman2013cell,kwon2007development}. According to the author's observations, there are four factors that potentially lead to the degradation of performance in microscopic results: the capacity drop phenomenon due to forced lane changes at the vicinity of bottlenecks, the uncertainties not captured by macroscopic models, non-optimal VSL sign locations and the shockwaves created by speed limit commands of VSL.

The capacity drop phenomenon refers to the decrease of maximum throughput at freeway bottlenecks due to forced lane change maneuvers when the traffic density is higher than the critical value \cite{chung2007relation}. It is not included in most macroscopic models but has a great influence on the traffic mobility in microscopic simulations \cite{zhang2017combined}. To address the issue, a lane change (LC) controller can be implemented at the upstream area of the bottleneck to guide the vehicles onto open lanes early, and thus, decrease the forced lane changes and significantly reduce the capacity drop phenomenon at the bottleneck\cite{zhang2017combined,guo2020integrated}. 

Many existing VSL algorithms rely on accurate measurements of traffic states to achieve the optimal performance, which can be hardly guaranteed in reality \cite{cheung2005traffic}. Model parameters such as the road capacity and backpropagation speed include uncertainties since they are typically chosen based on simulations or empirical values\cite{kontorinaki2017first}. Although robust solutions have been proposed in \cite{alasiri2020robust,yuan2021evaluation}, slightly larger size uncertainties or disturbances may still deteriorate the convergence of the closed-loop system.

The distance between VSL signs is a crucial control parameter that has been neglected by most researchers over the years \cite{martinez2020optimal}. According to \cite{carlson2010optimal}, the VSL zone distance and the discharging zone distance should be long enough for the vehicles to complete the decelerating and accelerating process respectively. In \cite{yuan2021evaluation}, the length of the upstream VSL zone is determined based on simulations. Both studies indicate that there exists a lower bound of the VSL sign distance, but no analysis has been performed on the derivation of this lower bound, which is of great significance as it can be a design parameter to improve performance.  

In a multi-section freeway model, the large reduction of VSL commands in two successive sections is another important factor that deteriorates performance as slowing down vehicles dramatically creates shockwaves due to the stop-and-go effect \cite{stern2018dissipation}. This effect is not captured in macroscopic models either. In \cite{hegyi2008specialist}, Hegyi et al. presented a VSL controller to suppress shockwaves. However, it is not designed for solving the bottleneck congestion. 

The purpose of this paper is to provide a solution to the aforementioned problems with consistent microscopic performance and real-world feasibility. The contributions of this paper in relation to our past work in \cite{yuan2021evaluation} are as follows:
\begin{enumerate}
    \item We present a much simpler and computationally more efficient rule-based VSL strategy to regulate the upstream traffic inflows to match with the bottleneck throughput and minimizes the speed variations between all the downstream sections, which is more robust against uncertainties in measurements and produces less shockwaves compared with the feedback-based VSL strategy in \cite{yuan2021evaluation}.
    \item We analytically developed a lower bound for the length of the most upstream VSL section that leads to a faster convergence to steady state densities and achieves much better benefits compared to adhoc locations used in past papers. The generated lower bound is an effective design tool in tuning and improving the performance of VSL.
\end{enumerate}

The rest of the paper is organized as follows: section \ref{section:litrev} reviews relevant literature. Section \ref{section:MSCTM} introduces the traffic model. Section \ref{section:CtrlDesign} presents a rule-based VSL strategy and analyzes the effect of the upstream VSL zone distance $L_0$. The proposed VSL controller is then combined with a LC controller to reduce the capacity drop. In section \ref{section:NumericalSimulations}, the evaluation results of multiple choices of $L_0$ under both heavy and moderate traffic conditions are presented first. Then a comparison between the proposed VSL controller and a feedback-based VSL controller is performed. Section \ref{section:Conclusion} presents the conclusion and future works.

\section{Literature Review} \label{section:litrev}
Variable speed limit (VSL) is one of the most widely studied traffic management strategies thanks to its effectiveness in regulating traffic flow and improving safety \cite{khondaker2015variable,zhang2018stability,abdel2006evaluation,yu2014optimal}. Some early VSL control studies focused on reducing the speed variations and stabilizing the traffic flow using reactive rule-based logic \cite{smulders1990control,zackor1991speed}. The improvement achieved by such rule-based VSL control approaches is often insignificant because of the limited VSL actions and the time lag between these actions. In the past two decades, the majority of the VSL control strategies were developed based on either local feedback  \cite{carlson2011local,jin2015control,zhang2018stability} or optimal control techniques \cite{hegyi2003mpc,khondaker2015variable,li2017reinforcement}. The main idea of the feedback-based VSL controller is to compute the VSL commands using the current and past traffic states, which usually requires less computation time than the optimal-control-based approach. However, the performance of the feedback-based VSL relies heavily on the accurate measurements of the traffic states, such as traffic flows and densities. Therefore, a small disturbance in measured densities, for example, may result in an unsatisfactory performance of the closed-loop system \cite{yuan2021evaluation}. The optimal-control-based VSL strategies are typically implemented within the Model Predictive Control (MPC) framework. At each time step, the VSL commands are calculated by solving an optimization problem with an objective function involving performance measures, such as total travel time (TTT), safety measurements, emission, and fuel consumption. This approach, however, does not guarantee the stability of the closed-loop system and takes substantial computational efforts when the road network is large \cite{zhang2018comparison}.

Most of the studies mentioned earlier assume a static environment with perfect measurements and models, which is not usually true in real-world scenarios. Therefore, the robustness issue of a developed VSL control against various types of uncertainties has to be examined. The existing approaches to enhance the robustness of VSL are mainly two folds: modifying classic traffic models such as the Lighthill-Whitham-Richards (LWR) model \cite{lighthill1955kinematic,richards1956shock} and the Cell Transmission Model (CTM) \cite{daganzo1994cell} to accommodate uncertainty terms; implementing a VSL controller that is less dependent on potential uncertainties. The first idea is adopted in the following studies: Liu et al. proposed a two-stage stochastic model that considers random traffic demands \cite{liu2021two}; Alasiri et al. modeled the uncertainties as an additional term in the traffic conservation law \cite{alasiri2020robust}. In accordance with the second idea, Frejo and Schutter presented a rule-based VSL controller that activates or deactivates the speed limits when the density of the corresponding bottleneck reaches a threshold that is determined offline \cite{frejo2018spert}. This approach is less dependent on the accurate measurements of traffic states. 

Another important new research direction in the application of VSL control is the effect of VSL sign locations. The VSL sign locations are often chosen empirically or based on the road configuration. This may lead to non-optimal control performance. In \cite{seraj2016optimal}, Seraj et al. suggested that the optimal VSL sign location could be determined by the space required for the traffic to reach the bottleneck capacity. In \cite{xu2019procedure}, the VSL signs were placed based on the collision risk at freeway recurrent bottlenecks. Although these studies provided some insights on the 'optimal' VSL sign location, there is no rigorous analysis that explains the impact of the VSL sign location on the performance of the VSL control. An exception is the work by Martinez and Jin \cite{martinez2020optimal}, where the authors treated the distance of the discharging zone as a control variable and optimized it based on the bottleneck capacity. The result indicates that higher speed limit commands lead to larger optimal discharging distance. To the best of our knowledge, the optimal distance of the upstream VSL zone and its impact on the closed-loop performance is still an open topic.  

The approach of integrating VSL control with other traffic flow regulation methods, such as ramp metering (RM) and lane change (LC), has been demonstrated to be more beneficial than implementing VSL solely in various scenarios \cite{carlson2010optimal,zhang2017combined,roncoli2016hierarchical,zhang2018integrated}. Many integrated approaches use optimal control or MPC to coordinate different controllers and optimize a cost function \cite{carlson2010optimal,roncoli2016hierarchical,guo2020integrated}. In \cite{zhang2018comparison}, the authors showed that a feedback-based scheme performs no worse than MPC with less computational efforts in terms of combining VSL with LC. In \cite{frejo2020logic}, Frejo and Schutter proposed a logic-based scheme that regulates the flow rates using VSL and RM controls. This is done by estimating the number of vehicles to be held or released in order to match the bottleneck flow with the capacity. The authors concluded that a well-tuned easy-to-implement integrated controller delivered similar performance, compared to an optimal controller, and thus, more suitable for real-world implementations.

\section{Multi-Section Cell Transmission Model} \label{section:MSCTM}

The Cell Transmission Model (CTM) is a discrete approximation of the Lighthill-Whitham-Richards (LWR) kinematic wave model of traffic flow \cite{lighthill1955kinematic,richards1956shock,daganzo1994cell,daganzo1995cell}. It has been widely used for traffic flow modeling and control design thanks to its simplicity and accurate ability to describe traffic dynamics at a macroscopic level \cite{kontorinaki2017first,zhang2018stability}. In the CTM framework, a highway segment is partitioned into $N$ small  homogeneous sections/cells and consecutively numbered from $1$ to $N$ in the traffic flow direction. Each section/cell is characterized by the vehicle density, inflow, and outflow, denoted by $\rho_i$, $q_i$, and $q_{i+1},$  respectively, where $i = 1, 2, \dots N.$ The density is updated using a first-order ordinary differential equation based on the traffic flow conservation, where the inflow and outflow are determined by the supply (or receiving) and demand (or sending) functions, which define a flow-density relationship known as the fundamental diagram \cite{drake1967statistical}. 

Though the original form of the CTM can reproduce traffic dynamics under both uncongested and congested conditions, it does not capture more complex traffic flow phenomena, such as the capacity drop and bounded acceleration effects due to forced lane-changing maneuvers at congested freeway bottlenecks  \cite{banks1991two,laval2006lane,zhang2017combined}. Therefore, the original form of the CTM, proposed by Daganzo in 1994, has been modified over the years in order to be consistent with the microscopic traffic flow observations \cite{zhang2017combined,jin2015control,srivastava2015modified,kontorinaki2017first,han2017resolving}. 

In this work, the most updated multi-section CTM, which takes into account the effect of both capacity drop and bounded acceleration, is considered \cite{zhang2018stability}. Without loss of generality, it is assumed that the geometry of all the sections is identical. Accordingly, the evolution of the vehicle density $\rho_i$ in each section is described by the following equations: 
\begin{equation} \label{eq:ctm_rho_i}
    \dot{\rho}_{i} = \frac{1}{L}\left(q_{i} - q_{i+1}\right), \quad \text{for} \ i = 1,...,N,
\end{equation}
where
\begin{equation}
    \label{eq:ctm_q_i}
    \begin{aligned}
        q_1 &= \min \big\{d, C, w(\rho^j - \rho_1)\big\}, \\
        q_i &= \min \big\{v_f \rho_{i-1}, \tilde{w}(\tilde{\rho}^j - \rho_{i-1}), C, 
        \\ & \qquad \quad \ \ w(\rho^j - \rho_i)\big\}, \quad \text{for} \ i = 2,..., N, \\
        q_{N+1} &= \min \big\{v_f\rho_N, \tilde{w}(\tilde{\rho}^j - \rho_N), (1- \epsilon(\rho_N))C_d\big\},
    \end{aligned}
\end{equation}
and 
\[
\epsilon(\rho_N) = \left\{ 
    \begin{aligned}
    &\epsilon_0 &&\text{if } C_d < C \text{ \& } \rho_N > \frac{C_d}{v_f}\\
    &0 && \text{otherwise}
    \end{aligned},
    \right.
\]
where the parameters in (\ref{eq:ctm_rho_i}) and (\ref{eq:ctm_q_i}) are defined in Table \ref{tab:table_parameters_def}.

\begin{table*}[h]
\centering
\caption{Definition of the Model Parameters}
\label{tab:table_parameters_def}
\begin{tabular}{lll}
\hline
\textbf{Symbol}  & \textbf{Definition} & \textbf{Unit} \\ \hline
$d$              & the upstream demand wanting to enter the considered road network & veh/h\\
$C$              & the capacity of each section/cell & veh/h\\
$C_d$            & the downstream capacity      & veh/h    \\
$v_f$            &  the free flow speed         & km/h    \\
$w$             &   the back propagation speed        & km/h       \\
$\tilde{w}$     &   the rate that the outflow $q_{i+1}$ decreases with density $\rho_i$, when $\rho_i \geq \rho_c$ \cite{srivastava2016lane}.  & km/h       \\
$\rho_c$        &     the critical density of the section/cell, at which \(v_f\rho_c = w(\rho^j - \rho_c) = \tilde{w}(\tilde{\rho}^j - \rho_c) = C.\)     & veh/km      \\
$\rho^j$        &   the jam density; the highest possible density, at which the inflow $q_i = 0.$ & veh/km\\
$\tilde{\rho}^j$  &   the jam density associated with outflow $q_{i+1}.$        & veh/km         \\
$L$      &    the length of each section/cell.       & km   \\
$\epsilon_0$     &    the capacity drop factor, where $\epsilon_0 \in (0,1).$    & unitless     \\ 
\hline
\end{tabular}
\end{table*}

Based on the presented macroscopic multi-section CTM, the goal is to design a traffic flow controller so that the traffic conditions of all the road sections operate within the free-flow region of the fundamental diagram, despite the activation of the downstream bottleneck. Since the mainstream traffic flow is to be regulated, variable speed limit (VSL) control is a reasonable traffic flow control strategy. The underlying idea is to regulate the inflow, $q_1,$ to a level that is within the capacity constraints of the downstream section at the bottleneck. Furthermore, minimize the speed variations between all the CTM sections to diminish the stop-and-go traffic behavior and, thus, achieve smooth traffic flow conditions. Driven by this idea, a rule-based VSL control is proposed in the following section, considering the distance of the most upstream VSL zone, denoted by $(L_0)$, as a control variable. In addition, a lane change (LC) controller is combined with the VSL to manage the lane-changing maneuvers in the vicinity of the bottleneck in order to prevent the VSL performance from being deteriorated \cite{zhang2017combined}.  

%%%%%%%%%%%%%%%%%%%%%%%%%%%%%%%%%%%%%%%%%%%%%%%%%%%%%%%%%%%%%%%%%%%%%%%%%%%%%%%%%%%%%%%%%%%%

\section{Control Design} \label{section:CtrlDesign}
This section aims to develop a combined VSL and LC control design and analyze the stability properties of the closed-loop system. The section length covered by the most upstream VSL sign is treated as a variable in the design and its impact on the performance of the closed-loop system is investigated.

\subsection{Rule-Based Variable Speed Limit Control} \label{subsec:VSL}
We propose a rule-based VSL controller to alleviate freeway bottleneck congestion, based on the multi-section CTM presented in section \ref{section:MSCTM}. The VSL control signs are implemented in the upstream of the first section as well as all CTM sections as shown in Fig. \ref{fig:L0Effect}. Each VSL command takes effect at the beginning of the section. The downstream bottleneck is created by a lane closure due to an incident. The control objective is to match the inflow of the first section, $q_1$, with the bottleneck capacity during the incident. 
\begin{figure}[h]
\centering
\includegraphics[width = 0.48\textwidth]{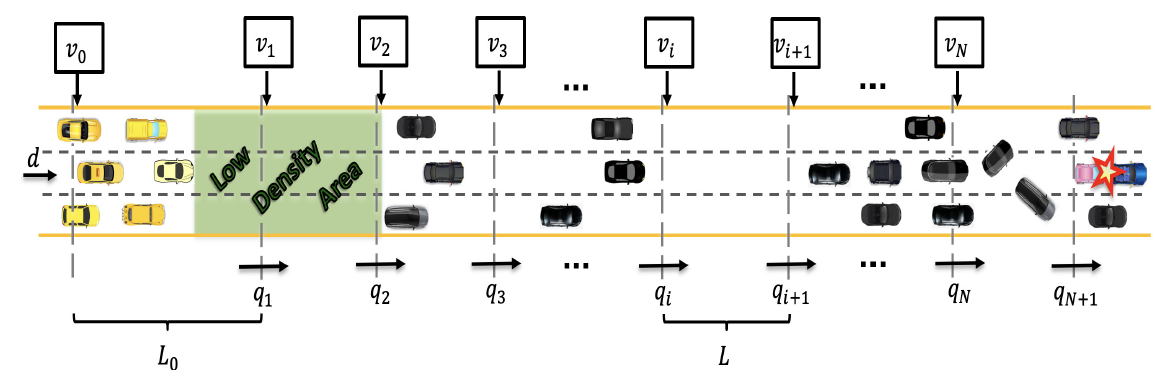}
\caption{Representation of a motorway stretch within the multi-section CTM framework with the VSL control.}
\label{fig:L0Effect}
\end{figure}

The maximum possible flow produced by a speed limit $v_i$ ($v_i \leq v_f$) can be derived from the geometry of the fundamental diagram shown in Fig. \ref{fig:FDwithVSL}, that is, $\frac{v_i w \rho^j}{v_i + w}$. Therefore, when the VSL commands take effect, the traffic flow dynamics are formulated as follows: 
\begin{equation} 
    \label{eq:qi_vsl}
    \begin{aligned}
        q_1 &= \min \big\{d,\frac{v_0 w\rho^j}{v_0 + w},\frac{v_1 w\rho^j}{v_1 + w},w(\rho^j - \rho_1)\big\}, \\
        q_i &= \min \big\{v_{i-1}\rho_{i-1},\frac{v_{i-1} w\rho^j}{v_{i-1} + w},\frac{v_i w\rho^j}{v_i + w},
        \\ & \qquad \quad \ \ w(\rho^j - \rho_i)\big\}, \quad \text{for} \ i = 2,..., N, \\
        q_{N+1} &= \min \big\{v_N\rho_N,(1-\epsilon(\rho_N))C_d,\tilde{w}(\tilde{\rho}^j-\rho_N)\big\}  
    \end{aligned}
\end{equation}

\begin{figure}[h]
\centering
\includegraphics[width = 0.45\textwidth]{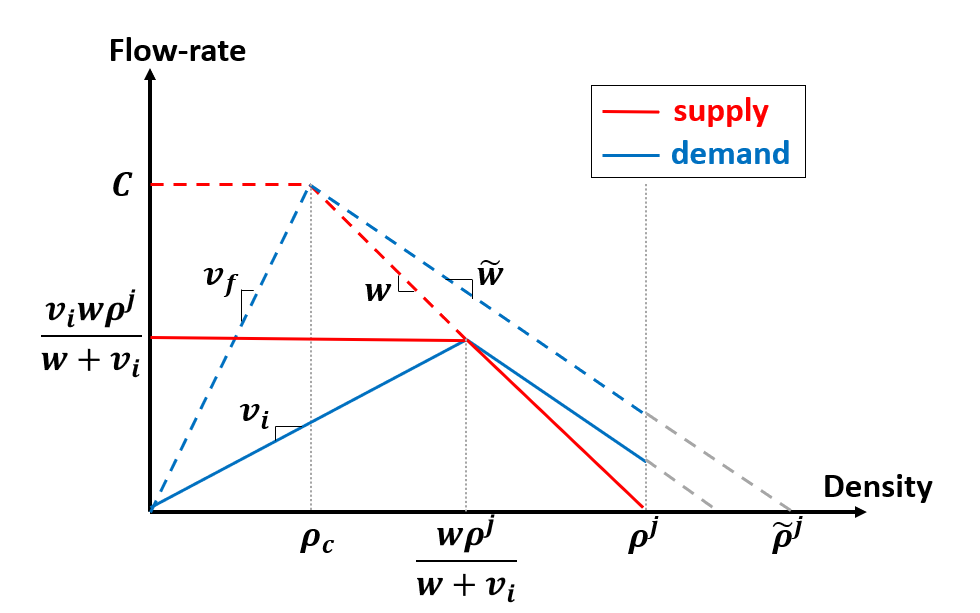}
\caption{Triangular fundamental diagram under VSL Control. The red curve represents the supply function. The blue curve represents the demand function.}
\label{fig:FDwithVSL}
\end{figure}

As discussed in section \ref{section:MSCTM}, when computing the bottleneck capacity $C_b$, we take the capacity drop phenomenon into consideration:
\begin{equation}
    \begin{aligned}
        C_b &= (1-\epsilon(\rho_N))C_d, \\
        \epsilon(\rho_N) &= \left\{ 
        \begin{aligned}
        &\epsilon_0 &&\text{if } C_d < C \text{ \& } \rho_N > \frac{C_d}{v_f}, \\
        &0 && \text{otherwise}
        \end{aligned}
        \right.
    \end{aligned}
\end{equation}

The key idea is to force the inflow $q_1$ to be less than or equal to the bottleneck capacity $C_b$ under various traffic conditions by adjusting the most upstream speed limit $v_0$. To be more specific, we consider the following three scenarios based on different levels of traffic demands $d$:
\begin{itemize}
    \item scenario 1: $d < (1-\epsilon_0)C_d$
    \item scenario 2: $(1-\epsilon_0)C_d \leq d \leq C_d$
    \item scenario 3: $d > C_d$
\end{itemize}

In scenario 1, the demand $d$ is less than the bottleneck capacity and no control effort is needed. Therefore, we simply let $v_0 = v_f$. In scenario 2, $d$ exceeds the dropped bottleneck capacity $(1-\epsilon_0)C_d$ when there exists congestion near the bottleneck. However, $d$ does not exceed the recovered bottleneck capacity $C_d$ when the congestion is cleared. The VSL control needs to be activated in order to regulate the inflow as follows:
\begin{equation}
    q_1 = 
    \left\{
    \begin{aligned}
        &d &&\text{if } 0 \leq \rho_N \leq \frac{C_d}{v_f}, \\
        &\frac{v_0 w \rho^j}{v_0 + w} = (1-\epsilon_0)C_d &&\text{otherwise} 
    \end{aligned}
    \right.
\end{equation}
In scenario 3, since $d > C_d$, the VSL control needs to be activated all the time to regulate the inflow $q_1$ so that:
\begin{equation}
    q_1 = \frac{v_0 w \rho^j}{v_0 + w} = 
    \left\{
    \begin{aligned}
        &C_d &&\text{if } 0 \leq \rho_N \leq \frac{C_d}{v_f}, \\
        &(1-\epsilon_0)C_d &&\text{otherwise} 
    \end{aligned}
    \right.
\end{equation}
Combining all three scenarios together, the VSL control command $v_0$ can be computed as:
\begin{equation} \label{eq:v0_VSLcommand}
    v_0 = 
    \left\{
    \begin{aligned}
        &\frac{w C_d}{w\rho^j-C_d} &&\text{if } d > C_d \  \& \ 0 \leq \rho_N \leq \frac{C_d}{v_f}, \\
        &\frac{w (1-\epsilon_0)C_d}{w\rho^j-(1-\epsilon_0)C_d} &&\text{if } d \geq (1-\epsilon_0)C_d \  \&\  \rho_N > \frac{C_d}{v_f}, \\
        &v_f &&\text{otherwise}
    \end{aligned}
    \right.
\end{equation}
Note that the condition $\rho_N > \frac{C_d}{v_f}$ represents the existence of bottleneck congestion, and its counterpart $0 \leq \rho_N \leq \frac{C_d}{v_f}$ means that the congestion has been removed. The bottleneck capacity recovers from $(1-\epsilon_0)C_d$ to $C_d$ during the transition from the former condition to the latter. In both scenarios 2 and 3, we switch $v_0$ from a lower value to a higher value once the transition is completed as indicated by (\ref{eq:v0_VSLcommand}) in order to maximize the bottleneck throughput.  

By matching $q_1$ upstream with $C_b$, the remaining downstream sections are set to maintain a steady traffic flow, achieved by simply letting
\begin{equation} \label{eq:vi_VSLcommand}
    v_i=v_f \quad \text{for } i=1,...,N
\end{equation}
Our previous research indicates that variations in the given speed limit commands, especially from high to low speeds, create backpropagations and, thus, deteriorate traffic mobility \cite{yuan2021evaluation}. This should be avoided as much as possible. In section \ref{section:NumericalSimulations}, we will show that during high density traffic conditions, the proposed VSL controller performs better than a classical feedback-based VSL controller, where the speed limit commands vary based on measured densities and flows.

\subsection{Analysis of VSL Zone Distance} \label{subsec:AnalysisL0}
In this subsection, we consider the length of the upstream VSL zone $L_0$ where the speed limit given by (\ref{eq:v0_VSLcommand}) will be applied as a control variable and analyze the impact of $L_0$ on the performance of the closed-loop system. According to the previous discussion, the VSL controller is activated immediately once the incident takes place, which creates a speed difference between the upstream VSL zone ($v_0$) and all the downstream sections ($v_1,...,v_N$) when $d \geq (1-\epsilon_0)C_d$. To demonstrate this, we use the scenario illustrated in Fig. \ref{fig:L0Effect}. The black vehicles entered the road network before the activation of the VSL. They travel without any restrictions from the speed limit signs and slow down as they approach the congested area in front of the incident location. The yellow vehicles entered the road network after the activation of the VSL controller. They travel at a reduced speed $v_0$ within the upstream VSL zone. As a result, a low-density area is created between the two groups of vehicles, allowing the shockwaves from the bottleneck to be absorbed. The existence and propagation of the low-density area can be observed from the flow curves shown in Fig. \ref{fig:flow_lowdensity}. Note that the vehicle input flow never drops down to 0, and the inflow drops in all downstream sections because of the VSL control.

\begin{figure}[h]
    \centering
    \includegraphics[width = 0.45\textwidth]{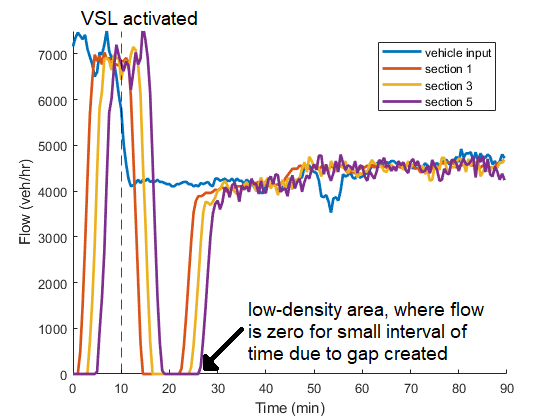}
    \caption{The behavior of flow curves with time when the VSL is activated.}
    \label{fig:flow_lowdensity}
\end{figure}

The low-density area needs to be long enough so that the congestion associated with the black vehicles is cleared before the yellow vehicles catch up. In other words, the first yellow vehicle should never reach the last black vehicle within the road network as it may create further shockwaves. This can be formulated as a chasing problem in which the time it takes to clear congestion at the bottleneck, denoted as $T_b$, is strictly less than the time spent for the first yellow vehicle to reach the bottleneck, denoted as $T_y$, i.e.,  $T_b < T_y$. 

\begin{theorem} \label{thm:bound_L0}
    Consider the freeway bottleneck control problem with a constant demand $d \geq (1-\epsilon_0)C_d$ and VSL commands given by (\ref{eq:v0_VSLcommand}) and (\ref{eq:vi_VSLcommand}). The propagation of the traffic congestion at the bottleneck can be completely absorbed by the low-density area created by the VSL control if the upstream VSL zone distance $L_0$ satisfies
    \begin{equation} \label{eq:bound_L0}
        L_0 > \frac{(v_f \sum_{i=1}^{N} \rho_i(t_0) - (1-\epsilon_0) C_d N)v_0 L}{((1-\epsilon_0)C_d - v_0\rho_0(t_0))v_f}
    \end{equation}
    where $\rho_i$ is the measured density of section $i$, for $i=0,1,2,...,N$. $C_d$ is the downstream capacity; $v_f$ is the free flow speed; $v_0$ is the upstream VSL command; $\epsilon_0$ is the capacity drop factor; $N$ is the number of downstream sections; $L$ is the length of each downstream section and $t_0 \geq 0$ is the time the incident takes place.
\end{theorem}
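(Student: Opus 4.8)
The plan is to turn the physical ``chasing'' requirement $T_b < T_y$ into an inequality on $L_0$, obtaining $T_y$ from elementary kinematics and $T_b$ from a vehicle-conservation (cumulative count) argument over the downstream sections. Throughout I abbreviate the regulated inflow/dropped capacity by $q^\ast := (1-\epsilon_0)C_d$, which by (\ref{eq:v0_VSLcommand}) is the value of $q_1$ enforced during congestion.

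First I would compute $T_y$, the travel time of the first yellow vehicle from the entrance of the upstream VSL zone to the bottleneck. It moves at $v_0$ over the length $L_0$ and, by (\ref{eq:vi_VSLcommand}), at $v_f$ over the $N$ downstream sections of total length $NL$, so
\[
T_y = \frac{L_0}{v_0} + \frac{NL}{v_f}.
\]

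Next comes the harder step, computing $T_b$ by treating the $N$ downstream sections as a single reservoir. At $t_0$ the reservoir holds $L\sum_{i=1}^{N}\rho_i(t_0)$ vehicles. While the bottleneck is congested ($\rho_N > C_d/v_f$), the last term in $q_{N+1}$ of (\ref{eq:qi_vsl}) binds, so the outflow is pinned at the dropped capacity $q^\ast$; meanwhile the inflow from the VSL zone is the flux of the low-density region between the black and yellow platoons, which at activation equals $v_0\rho_0(t_0)$. Taking this inflow constant, the reservoir drains at the net rate $q^\ast - v_0\rho_0(t_0) > 0$. Congestion is cleared once every downstream section returns to the steady-state free-flow density $q^\ast/v_f$ consistent with $v_i=v_f$ and $q_1=q^\ast$, i.e. once the count falls to $LNq^\ast/v_f$. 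Hence the excess drains in time
\[
\frac{L\sum_{i=1}^{N}\rho_i(t_0) - LNq^\ast/v_f}{q^\ast - v_0\rho_0(t_0)} = \frac{L\big(v_f\sum_{i=1}^{N}\rho_i(t_0)-q^\ast N\big)}{v_f\big(q^\ast - v_0\rho_0(t_0)\big)},
\]
and adding the free-flow traversal $NL/v_f$ of the now-cleared downstream sections gives $T_b$. Imposing $T_b < T_y$, the common term $NL/v_f$ cancels on both sides, leaving $\frac{L(v_f\sum_{i=1}^{N}\rho_i(t_0)-q^\ast N)}{v_f(q^\ast - v_0\rho_0(t_0))} < \frac{L_0}{v_0}$; multiplying by $v_0$ and restoring $q^\ast=(1-\epsilon_0)C_d$ yields exactly (\ref{eq:bound_L0}).

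The main obstacle is the second step, namely pinning down the effective drain rate $q^\ast - v_0\rho_0(t_0)$. Two points need care. First, I must justify that the bottleneck outflow stays at the dropped capacity $(1-\epsilon_0)C_d$ for the entire congested phase; this follows from the form of $q_{N+1}$ in (\ref{eq:qi_vsl}) together with $\epsilon(\rho_N)=\epsilon_0$ whenever $\rho_N > C_d/v_f$, provided the free-flow and supply terms do not bind (which holds in congestion). Second, the inflow is in reality time-varying, rising toward $q^\ast$ as the VSL zone fills, so fixing it at the activation value $v_0\rho_0(t_0)$ is a modeling choice that I would state explicitly and interpret as the flux of the low-density gap feeding the sections during clearing. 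Reducing the $N$ coupled CTM sections of (\ref{eq:ctm_rho_i}) to a single reservoir and identifying $q^\ast/v_f$ as the target density must likewise be made precise; once those are fixed, everything else is the routine algebra above.
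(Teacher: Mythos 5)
Your final inequality is exactly (\ref{eq:bound_L0}), but you reach it by a different bookkeeping than the paper, and the comparison is worth making explicit. The paper counts \emph{every} vehicle present at $t_0$, including those inside the upstream VSL zone, $N_b=L_0\rho_0(t_0)+L\sum_{i=1}^{N}\rho_i(t_0)$ as in (\ref{eq:est_Nb}), pins the bottleneck discharge at $q^*=(1-\epsilon_0)C_d$ to get $T_b=N_b/q^*$ as in (\ref{eq:Tb}), and solves $T_b<T_y$ with $T_y=L_0/v_0+NL/v_f$; the sign condition $v_0\rho_0(t_0)<q^*$, their (\ref{eq:v0_loosebound}), is what preserves the inequality direction when isolating $L_0$, and is argued to hold automatically for $v_0$ given by (\ref{eq:v0_VSLcommand}). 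You instead restrict conservation to the $N$ downstream sections, netting the pinned outflow $q^*$ against an inflow $v_0\rho_0(t_0)$ and draining the excess above the free-flow count $NLq^*/v_f$. The two bookkeepings are equivalent by conservation, which is why the algebra agrees: your inflow contributes $v_0\rho_0(t_0)\cdot(L_0/v_0)=L_0\rho_0(t_0)$ vehicles, precisely the VSL-zone vehicles the paper folds into $N_b$, and your target count equals $q^*\cdot(NL/v_f)$, precisely what the bottleneck discharges during the paper's extra $NL/v_f$. Your version has the merit of making explicit what density profile the downstream sections must relax to during the clearing phase.

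The step that does not stand as written is your definition of $T_b$. The time to clear the congestion is the drain time itself; appending the free-flow traversal $NL/v_f$ ``of the now-cleared sections'' is not a property of the clearing event, and it is exactly the term needed to force the cancellation, so as written it is reverse-engineered rather than derived. It is also inconsistent with your own assumption: if the inflow truly stayed at $v_0\rho_0(t_0)<q^*$ forever, the reservoir would drain eventually for \emph{any} $L_0>0$ and no lower bound would arise at all. The repair, inside your own framework, is to drop the $NL/v_f$ patch and observe that the net drain rate $q^*-v_0\rho_0(t_0)$ persists only on a window of length $L_0/v_0$: after that, the yellow platoon---metered by the VSL to flux $\min\{d,q^*\}=q^*$ since $d\geq(1-\epsilon_0)C_d$---reaches section 1, inflow matches the congested outflow, and the count stops decreasing. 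Hence congestion clears if and only if the drain completes within the window, i.e.\ drain time $<L_0/v_0$, which is verbatim your displayed inequality and therefore (\ref{eq:bound_L0}). Finally, you assert $q^*-v_0\rho_0(t_0)>0$ without proof; this is the paper's condition (\ref{eq:v0_loosebound}) and does need justification: since $v_0$ from (\ref{eq:v0_VSLcommand}) satisfies $v_0w\rho^j/(v_0+w)=q^*$, the quantity $q^*/v_0=w\rho^j/(v_0+w)$ is the critical density of the reduced fundamental diagram, which strictly exceeds the uncongested density $\rho_0(t_0)\leq\rho_c$ of the upstream zone at activation. With those two fixes your argument is complete and is a legitimate, arguably more mechanistic, alternative to the paper's chasing-time computation.
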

\begin{proof}
    Let's start by estimating the number of vehicles $N_b$ that already entered the network (colored in black in Fig. \ref{fig:L0Effect}) using the measured densities at the time $t_0$:
    \begin{equation} \label{eq:est_Nb}
        N_b = L_0\rho_0(t_0) + L\sum_{i=1}^{N} \rho_i(t_0)
    \end{equation}
    Since the traffic flow at the bottleneck is $q_{N+1}$, we have
    \begin{equation} \label{eq:est_Tb_1}
        \int_{t_0}^{t_0 + T_b} q_{N+1}(\tau)d\tau = N_b
    \end{equation}
    When congestion is active near the bottleneck, $q_{N+1}$ is equal to $(1-\epsilon_0)C_d$ due to the capacity drop phenomenon. Thus,
    \begin{equation} \label{eq:est_Tb_2}
        \int_{t_0}^{t_0 + T_b} q_{N+1}(\tau)d\tau = (1-\epsilon_0)C_d T_b
    \end{equation}
    From (\ref{eq:est_Nb}),(\ref{eq:est_Tb_1}) and (\ref{eq:est_Tb_2}), we obtain the time to clear congestion at the bottleneck as
    \begin{equation} \label{eq:Tb}
        T_b = \frac{L_0\rho_0(t_0) + L\sum_{i=1}^{N} \rho_i(t_0)}{(1-\epsilon_0)C_d}
    \end{equation}
    On the other hand, the yellow vehicles are able to follow the speed limit commands when traveling through the network. The time the first yellow vehicle reaches the bottleneck denoted by $T_y$ can be computed as
    \begin{equation} \label{eq:Ty}
        T_y = \frac{L_0}{v_0}+\frac{NL}{v_f}
    \end{equation}
    Therefore, $T_b<T_y$ yields
    \begin{equation} \label{eq:ineq_L0v0}
        \frac{L_0\rho_0(t_0) + L\sum_{i=1}^{N} \rho_i(t_0)}{(1-\epsilon_0)C_d} < \frac{L_0}{v_0}+\frac{NL}{v_f}
    \end{equation}
    which is equivalent to (\ref{eq:bound_L0}) giving the following condition:
    \begin{equation} \label{eq:v0_loosebound}
        v_0 < \frac{(1-\epsilon_0)C_d}{\rho_0(t_0)}
    \end{equation}
    Note that (\ref{eq:v0_loosebound}) is automatically satisfied as we compute $v_0$ with (\ref{eq:v0_VSLcommand}). Therefore,  (\ref{eq:bound_L0}) guarantees that $T_b<T_y$, which prevents shockwaves and improves the performance of the closed-loop system.
\end{proof}

According to (\ref{eq:bound_L0}), the lower bound of $L_0$ is positively correlated with the speed limit of the upstream VSL zone, $v_0$, and the initial densities of all sections, $\rho_i(t_0)$, for $i=0,...,N$. Although there is no theoretical upper bound on $L_0$, overextending it leads to undesirable travel time and underutilization of the road. We will demonstrate the impact of $L_0$ and verify the effectiveness of the computed lower bound under different traffic scenarios via microscopic simulations in section \ref{section:NumericalSimulations}.

\subsection{Lane Change Control} \label{subsec:LCC}
According to (\ref{eq:Tb}), the time spent for clearing the bottleneck, i.e., $T_b$, can be reduced by increasing the bottleneck throughput. Therefore, we implement the Lane Change (LC) control in the discharging section in order to reduce the capacity drop, increase the bottleneck throughput, and accelerate the convergence. 

The mechanism of LC control involves two ingredients. The first one is to give appropriate lane-changing recommendations to vehicles moving in the closed lane before approaching the bottleneck. The second ingredient is determining at what distance from the bottleneck, referred to as $d_{LC}$, these recommendations are provided. $d_{LC}$ needs to be long enough so that the vehicles can complete the lane change maneuvers safely, but an overextended $d_{LC}$ may lead to the underutilization of the road. In \cite{zhang2017combined}, an empirical formula is proposed to determine the value of $d_{LC}$ as follows:
\begin{equation}
    d_{LC} = \xi \cdot n
\end{equation}
where $n$ is the number of lanes closed at the bottleneck, $\xi$ is a design parameter that depends on the traffic demand. $\xi$ could be found by simulations. For the specific highway segment in our case, $\xi$ takes the value of 800 m and 700 m for high and moderate traffic demands respectively \cite{zhang2017combined}.

\subsection{Stability Analysis}
In this section, we perform a rigorous stability analysis of the closed-loop system with the proposed integrated VSL and LC controller. There are three control variables: the upstream VSL command $v_0$ given by (\ref{eq:v0_VSLcommand}), the upstream VSL zone distance $L_0$ whose lower bound is given by (\ref{eq:bound_L0}), and the lane change distance $d_{LC}$ suggested by \cite{zhang2017combined}. Accordingly, the closed-loop system (\ref{eq:ctm_rho_i})-(\ref{eq:qi_vsl}) with (\ref{eq:v0_VSLcommand}) can be expressed as follows:
\begin{equation} \label{eq:CLsys_rho_i}
    \dot{\rho}_{i} = \frac{1}{L} \left(q_{i} - q_{i+1} \right), \quad \text{for} \  i = 1,...,N,
\end{equation}
where
\begin{equation} 
    \label{eq:CLsys_qi}
    \begin{aligned}
        q_1 &= \min \big\{d,\frac{v_0 w\rho^j}{v_0 + w},C,w(\rho^j - \rho_1)\big\},\\
        q_i &= \min \big\{v_f\rho_{i-1},C,w(\rho^j - \rho_i)\big\}, \quad \text{for } i = 2,..., N, \\
        q_{N+1} &= \min \big\{v_f\rho_N,(1-\epsilon(\rho_N))C_d,\tilde{w}(\tilde{\rho}^j-\rho_N)\big\}
    \end{aligned}
\end{equation}

Based on the analysis in section \ref{subsec:AnalysisL0}, vehicles that enter the network before the activation of VSL control (colored in black in Fig. \ref{fig:L0Effect}) travel downstream freely and create congestion at the bottleneck. Although the traveling speed is difficult to compute when the traffic is congested, the bottleneck throughput is stabilized around $(1-\epsilon_0)C_d$, meaning that the congested density is at steady state for the given demand. According to theorem \ref{thm:bound_L0}, the congestion can be removed without affecting the new group of vehicles (held by $v_0$, colored in yellow in Fig. \ref{fig:L0Effect}) by choosing $L_0$ appropriately. Once these (yellow) vehicles exit the upstream VSL zone, they travel through all downstream sections in free-flow speed. Note that the speed is maintained even when they approach the bottleneck because we have matched the inflow $q_1$ with the bottleneck capacity and applied LC recommendations to avoid sudden lane changes. As a result, system (\ref{eq:CLsys_rho_i})-(\ref{eq:CLsys_qi}) can be further simplified when the congestion is removed ($t > t_0 + T_b$) as follows:

\begin{equation}
    \label{eq:CLsys_simplified_rho_i}
    \begin{aligned}
        \dot{\rho}_{1} &= \frac{1}{L}(\min \{d,\frac{v_0 w\rho^j}{v_0 + w}\} - v_f\rho_{1}), \\
        \dot{\rho}_{i} &= \frac{v_f}{L}(\rho_{i-1} - \rho_{i}), \quad \text{for} \  i = 2,...,N \ 
    \end{aligned}
\end{equation}

\begin{theorem} \label{theorem:stability_CLsys}
    Consider the closed-loop system (\ref{eq:CLsys_rho_i})-(\ref{eq:CLsys_qi}) with the traffic demand $d \geq (1-\epsilon_0)C_d$. The proposed VSL controller given by (\ref{eq:v0_VSLcommand})-(\ref{eq:bound_L0}) guarantees that the density converges exponentially fast to the equilibrium point $\rho_i^e = \frac{\min\{d,C_d\}}{v_f}$ ($i=1,...,N$) for $t > t_0 + T_b$ which corresponds to the maximum possible throughput, where $t_0$ is the time the controller is activated and $T_b$ is given by (\ref{eq:Tb}).
\end{theorem}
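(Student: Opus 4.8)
The plan is to take the reduced dynamics (\ref{eq:CLsys_simplified_rho_i}), which are valid for $t > t_0 + T_b$ by the preceding discussion and Theorem \ref{thm:bound_L0}, identify the equilibrium explicitly, and then prove exponential convergence by analyzing the resulting linear error system. First I would pin down the constant inflow $u := \min\{d, \frac{v_0 w\rho^j}{v_0 + w}\}$ that drives $\rho_1$. Since $t > t_0 + T_b$ means congestion has been cleared (so $\rho_N \le C_d/v_f$), the command (\ref{eq:v0_VSLcommand}) gives either $v_0 = v_f$ (when $d \le C_d$) or $v_0 = \frac{w C_d}{w\rho^j - C_d}$ (when $d > C_d$). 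Using the fundamental-diagram identity $\frac{v_f w\rho^j}{v_f + w} = C$, which follows from the definition of $\rho_c$ in Table \ref{tab:table_parameters_def}, in the first case, and a direct substitution giving $\frac{v_0 w\rho^j}{v_0 + w} = C_d$ in the second, I would show $u = \min\{d, C_d\}$ in both cases.

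With $u$ in hand, the unique equilibrium of the first equation in (\ref{eq:CLsys_simplified_rho_i}) is $\rho_1^e = u/v_f = \min\{d, C_d\}/v_f$, and setting $\dot\rho_i = 0$ in the cascade forces $\rho_i^e = \rho_{i-1}^e$, so $\rho_i^e = \min\{d, C_d\}/v_f$ for all $i$, as claimed. Next, introducing the error $e_i := \rho_i - \rho_i^e$ and using $\rho_{i-1}^e = \rho_i^e$, the reduced system becomes the linear time-invariant cascade $\dot e_1 = -\frac{v_f}{L} e_1$ and $\dot e_i = \frac{v_f}{L}(e_{i-1} - e_i)$ for $i = 2, \dots, N$, i.e. $\dot e = A e$ with $A$ a lower bidiagonal matrix having $-v_f/L$ on the diagonal and $v_f/L$ on the first subdiagonal. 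Because $A$ is triangular, its only eigenvalue is $-v_f/L < 0$, so $A$ is Hurwitz and the origin $e = 0$ is exponentially stable; equivalently, solving the cascade top-down shows $e_1(t)$ is a pure decaying exponential and each $e_i(t)$ is a polynomial in $t$ of degree $i-1$ times $e^{-(v_f/L)(t - t_0 - T_b)}$, which converges to zero exponentially at any rate $\alpha < v_f/L$. This establishes the claimed exponential convergence of $\rho_i$ to $\rho_i^e$.

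The step I expect to require the most care is justifying that the reduced model (\ref{eq:CLsys_simplified_rho_i}) actually governs the trajectory for \emph{all} $t > t_0 + T_b$ — that is, that the selected branches of the $\min$ operators in (\ref{eq:CLsys_qi}) remain the binding ones along the solution. Concretely, I would argue that the post-congestion state lies in the free-flow region (each $\rho_i \le \rho_c$, so the supply terms $w(\rho^j - \rho_i)$, the section capacity $C$, and the downstream terms are never active), and that this region is positively invariant under the cascade, so that $q_i = v_f\rho_{i-1}$ and $q_1 = u$ persist. The only other subtlety is the repeated eigenvalue of $A$: the associated nontrivial Jordan block produces polynomial-in-$t$ factors, but since these multiply $e^{-(v_f/L)t}$ they do not spoil exponential decay, which I would make precise with the standard bound $t^k e^{-(v_f/L)t} \le M_k e^{-\alpha t}$ for any $\alpha < v_f/L$.
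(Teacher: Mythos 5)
Your proposal is correct and follows essentially the same route as the paper: reduce to the simplified cascade (\ref{eq:CLsys_simplified_rho_i}) valid for $t > t_0 + T_b$, identify the common equilibrium $\rho_i^e = \min\{d, C_d\}/v_f$, and conclude exponential stability from the lower-triangular (bidiagonal) structure of the system matrix with negative diagonal entries. You supply some details the paper leaves implicit --- the explicit verification that $\frac{v_0 w\rho^j}{v_0+w} = C_d$ after the switch of $v_0$, the error-coordinate form, and the Jordan-block bound $t^k e^{-(v_f/L)t} \le M_k e^{-\alpha t}$ --- but these are refinements of the same argument, not a different one.
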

\begin{proof}
Let us define $\rho=[\rho_1,\rho_2,...,\rho_N]^\text{T}$, and then compute the equilibrium point of system (\ref{eq:CLsys_simplified_rho_i}) by setting $\dot{\rho}=0$, which yields
\begin{equation}
    \label{eq:CLsys_equilibrium_1}
        \rho_1^e = \rho_2^e = ... = \rho_N^e = \frac{\min \{d,C_d\}}{v_f}, \quad t > t_0 + T_b
\end{equation}

Note that $v_0$ has already been switched to the higher value after the congestion is cleared according to (\ref{eq:v0_VSLcommand}), which corresponds to the maximum bottleneck throughput $\min \{d,C_d\}$ without the capacity drop. To prove the exponential convergence, we rewrite system (\ref{eq:CLsys_simplified_rho_i}) in matrix form:
\begin{equation}
    \label{eq:CLsys_simplified_rho_i_matrix}
    \dot{\rho} = \frac{1}{L}(A_1\rho + b_1), \quad t > t_0 + T_b \\
\end{equation}
where
\[
  A_1 =
  \begin{bmatrix}
    -v_f & 0 & \dots & 0 & 0 \\
    v_f & -v_f & \dots & 0 & 0 \\
    \vdots & \ddots & \ddots & \vdots & \vdots \\
    0 & \dots & v_f & -v_f & 0 \\
    0 & \dots & 0 & v_f & -v_f
  \end{bmatrix} \ 
  b_1 = 
  \begin{bmatrix}
    \min \{d,C_d\} \\
    0 \\
    \vdots \\
    0 \\
    0
  \end{bmatrix}
\]
$A_1$ is a lower triangular matrix with all diagonal entries being negative real numbers. Therefore, system (\ref{eq:CLsys_simplified_rho_i_matrix}) is exponentially stable when $t > t_0 + T_b$.
\end{proof}

%%%%%%%%%%%%%%%%%%%%%%%%%%%%%%%%%%%%%%%%%%%%%%%%%%%%%%%

\section{Numerical Simulations} \label{section:NumericalSimulations}

\subsection{Simulation Network and Parameter Selection}
The commercial microscopic simulator PTV VISSIM 10 is used to evaluate the performance of the proposed controller. The road network in Fig. \ref{fig:I710simnet} is a simplified version of a 14.4-km (9 mi) segment of I-710 freeway from I-105 to the Long Beach Port in California, United States. There are 6 CTM sections with a unified length of 1.6 km (1 mi). The length of the upstream VSL zone is determined according to (\ref{eq:bound_L0}) and verified by the simulation results. The network has 3 lanes, and no on-ramps or off-ramps are considered. Each simulation run lasts for 90 min. At 10 min, the middle lane is closed due to an incident that takes place at the downstream exit of the road network, which creates a bottleneck when the demand $d$ is larger than the downstream capacity $C_d$. The incident is cleared eventually at 80 min. 
\begin{figure}[h]
\centering
\includegraphics[width = 0.48\textwidth]{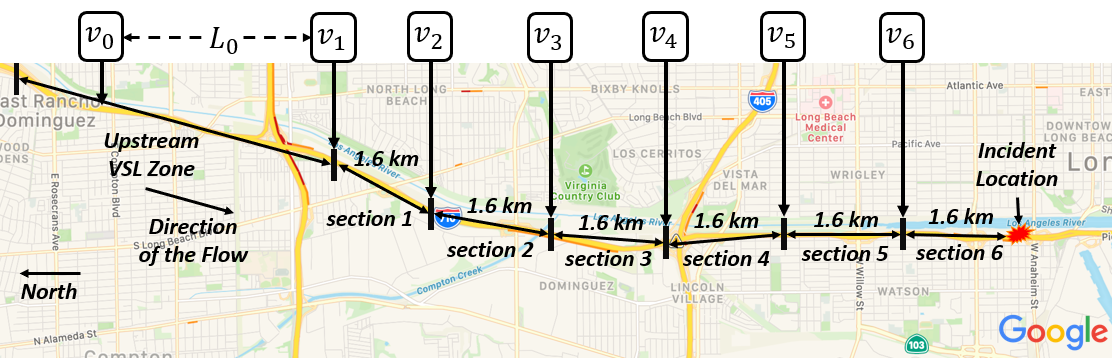}
\caption{The I-710 simulation network.}
\label{fig:I710simnet}
\end{figure}

We first run multiple simulations with the demand $d$ gradually increasing in the open-loop system (without any control). Based on the collected measurements of the flow and density of the last section, we draw the fundamental diagram of the simulation network as shown in Fig. \ref{fig:FDsimnet}. The following parameters can be determined according to the fundamental diagram: the road capacity $C=7200$ veh/h, the bottleneck capacity $C_d=4800$ veh/h, the bottleneck capacity with the capacity drop $(1-\epsilon_0)C_d=4300$ veh/h. Thus, the level of the capacity drop can be estimated as $\epsilon_0=0.1$. The free-flow speed $v_f$ is set to be 100 km/h. The backpropagation speeds are selected with the empirical values proposed in \cite{alasiri2020robust}: $w=30$ km/h and $\tilde{w}=15$ km/h. Using the geometry in Fig. \ref{fig:FDwithVSL}, we have $\rho^j = C/v_f+C/w = 312$ veh/km and $\tilde{\rho}^j = C/v_f+C/\tilde{w} = 552$ veh/km.
\begin{figure}[h]
\centering
\includegraphics[width = 0.45\textwidth]{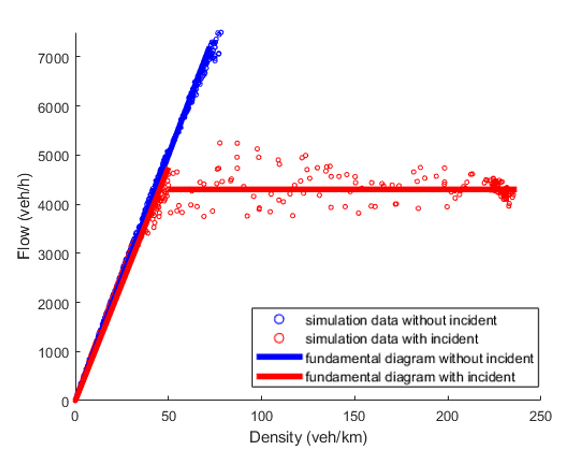}
\caption{Fundamental diagram of the simulation network with (red) and without (blue) incident.}
\label{fig:FDsimnet}
\end{figure}

\subsection{Performance Measurements}
The performance measurements and criteria that the authors used in \cite{zhang2017combined} are adopted here to evaluate the effectiveness of the proposed controller. The following are brief definitions of the performance measurements, mainly for the sake of completeness:
\begin{itemize} 
    \item Average Travel Time (ATT): the average time spent for each vehicle to travel through the whole network.
    \begin{equation}
        ATT = \frac{1}{N_v}\sum_{i=1}^{N_v} (t_{i,out}-t_{i,in})
    \end{equation}
    where $N_v$ is the number of vehicles passing through the network, $t_{i,in}$ and $t_{i,out}$ is the time vehicle $i$ enters and exits the network respectively. 
    \item Average number of stops: the average number of stops performed by each vehicle when traveling in the network.
    \begin{equation}
        \bar s = \frac{1}{N_v}\sum_{i=1}^{N_v} s_i
    \end{equation}
    where $s_i$ is the number of stops performed by vehicle $i$.
    \item Average emission rates of CO$_2$: the calculation of emission rates is based on the MOVES model provided by the Environment Protection Agency (EPA) \cite{epa2010motor}. 
    \begin{equation}
        \bar R = \sum_{i=1}^{N_v} E_i / \sum_{i=1}^{N_v} d_i
    \end{equation}
    where $E_i$ is the emission produced by vehicle $i$ and $d_i$ is the travelled distance of vehicle $i$
    \item The relative root mean square error (RRMSE): we compare the average density measurements of each downstream sections with the desired equilibrium and compute the RRMSE to indicate whether the closed-loop system is stabilized around the desired equilibrium. 
    \begin{equation}
        e_\rho = \frac{1}{\rho^*}\sqrt{\frac{1}{t_e-t_s}\int_{t_s}^{t_e}(\bar\rho(\tau)-\rho^*)^2}
    \end{equation}
    where $\rho^*=\frac{\min\{d,C_d\}}{v_f}$ is the desired equilibrium, $t_e$ is the time when the incident ends, $t_s$ is the time we switch $v_0$ to a higher value, $\bar\rho$ is the average density measurement. In our simulations, we assume that the section densities reach a steady state if $e_\rho \leq 25\%$. 
\end{itemize}

\subsection{Microscopic Simulations}
In this section, we present the simulation results under both high and moderate traffic demands where $d=7000$ and $5500$ veh/h respectively. For high demand, we evaluate 12 scenarios with the upstream VSL zone distance $L_0=[0,0.8,1.2,1.4,1.6,1.8,2.0,2.2,2.4,3.2,4.0,4.8]$ km. For moderate demand, we evaluate 9 scenarios with $L_0=[0,0.4,0.6,0.8,1.0,1.2,1.6,3.2,4.8]$ km. We take the average results of 10 independent Monte Carlo simulations for each scenario to reduce the randomness and increase reliability.

Since both high and moderate demands exceed the bottleneck capacity $C_d=4800$ veh/h, the traffic conditions before the bottleneck congestion being cleared are $d>C_d$ and $\rho_N>C_d/v_f$, which falls into the second case of (\ref{eq:v0_VSLcommand}). Plugging the model parameters, we have $v_0=25.7$ km/h. After the bottleneck congestion is removed, the traffic conditions become $d>C_d$ and $0 \leq \rho_N \leq C_d/v_f$, which falls into the first case of (\ref{eq:v0_VSLcommand}). In this case, $v_0=31.6$ km/h. Theoretically, these two values of $v_0$ change $q_1$ to match the dropped and recovered bottleneck capacity respectively. However, these values may be too aggressive in practice and should be considered as the upper bounds of $v_0$ due to the uncertainties in model parameters such as $w$ and the randomness in microscopic simulations. Therefore, we select $v_0$ to be 20 and 25 km/h before and after the removal of the bottleneck congestion. To determine the proper time of switching $v_0$ from 20 to 25 km/h, we use the time needed to clear the bottleneck congestion given (\ref{eq:Tb}) calculated to be 14 min for high demand and 11 min for moderate demand. Taking the potential uncertainties into account, the actual time required for dissipating the congestion may be longer. To enhance the robustness and ensure that the switching happens after the removal of the congestion, we set the switching time to be 20 min after the occurrence of the incident for both demand levels, i.e. $t_s=30$ min. In summary, the control command of $v_0$ for both high and moderate demands is given as 
\begin{equation} \label{eq:v0_simu}
    v_0 = 
    \left\{
    \begin{aligned}
        &20 \text{ km/h} &&\text{if } 10 \leq t < 30, \\
        &25 \text{ km/h} &&\text{if } 30 \leq t < 80, \\
        &100 \text{ km/h} &&\text{otherwise}
    \end{aligned}
    \right.
\end{equation}

Plugging the lower value of $v_0$ (20 km/h) and $d$ into (\ref{eq:bound_L0}), we obtain the lower bounds of $L_0$ as 1.8 km for high demand and 0.7 km for moderate demand. In our simulation study we vary $L_0$ from values below and above the lower bound and examine the impact on the performance of the VSL controller. 

Fig. \ref{fig:erho_L0}-\ref{fig:ATT_L0} show the evaluation results of all the above-mentioned scenarios under both traffic demands in terms of the RRMSE in densities, the average number of stops, the average emission rates of CO$_2$ and the average travel time (ATT) respectively. Each fitting curve is generated by the smoothing spline fitting algorithm in MATLAB. 

\begin{figure}[h]
\centering
\includegraphics[width = 0.45\textwidth]{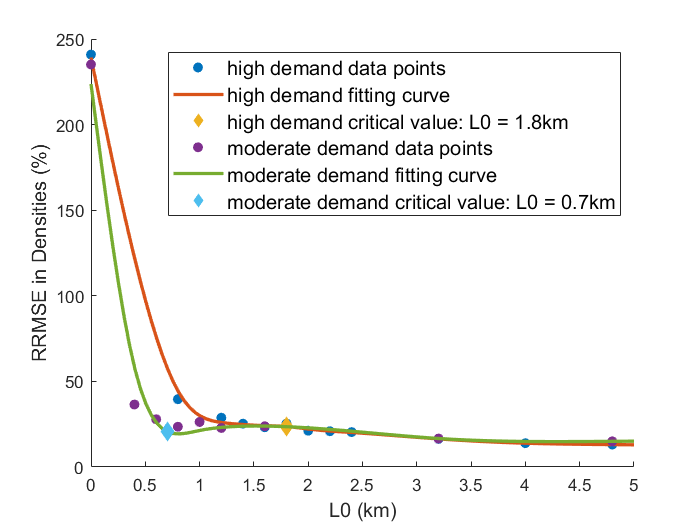}
\caption{RRMSE in densities ($e_\rho$) vs. $L_0$.}
\label{fig:erho_L0}
\end{figure}

\begin{figure}[h]
\centering
\includegraphics[width = 0.45\textwidth]{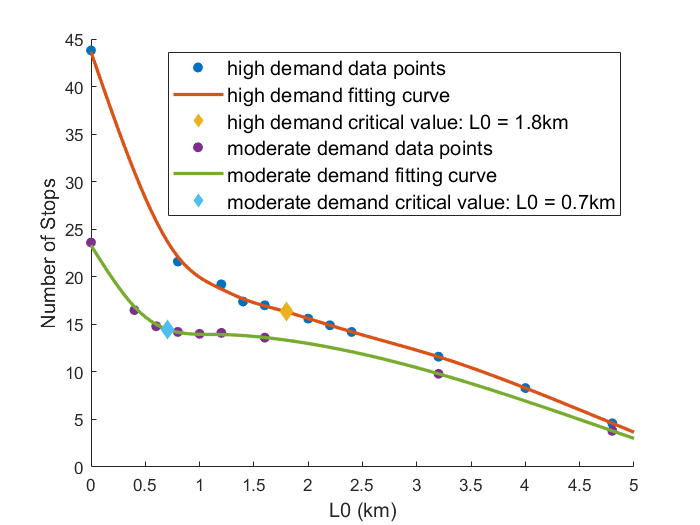}
\caption{Average number of stops ($\bar s$) vs. $L_0$.}
\label{fig:bars_L0}
\end{figure}

\begin{figure}[h]
\centering
\includegraphics[width = 0.45\textwidth]{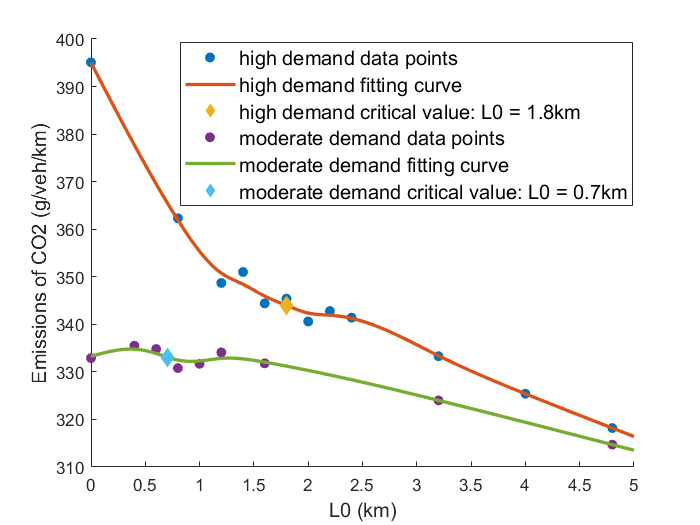}
\caption{Average emission rates of CO$_2$ vs. $L_0$.}
\label{fig:CO2_L0}
\end{figure}

\begin{figure}[h]
\centering
\includegraphics[width = 0.45\textwidth]{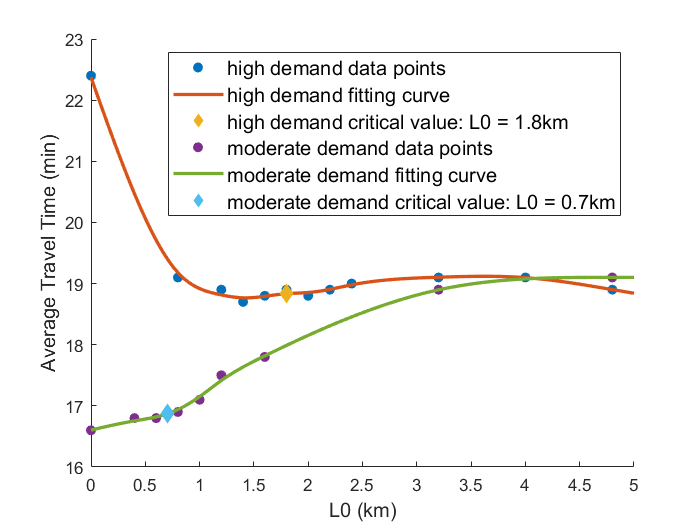}
\caption{Average travel time (ATT) vs. $L_0$.}
\label{fig:ATT_L0}
\end{figure}

In Fig. \ref{fig:erho_L0}, $e_\rho$ drops down to 25\% under both traffic demands as $L_0$ reaches the lower bound, which implies that the density of each CTM section reaches the steady state and verifies the correctness of the computed lower bound. In Fig. \ref{fig:bars_L0} and \ref{fig:CO2_L0}, we observe significant benefits in the number of stops and the emission rates of CO$_2$ under high demand when $L_0$ is close to the lower bound. Under moderate demand, both benefits are less obvious because the performance deterioration caused by the congestion is less severe. As we further extend $L_0$ beyond the lower bound, these benefits increase consistently. It seems that $L_0$ should be as long as possible in order to maximize the benefits in terms of number of stops and emissions. However, overextending $L_0$ leads to undesirable ATT, as shown in Fig. \ref{fig:ATT_L0}. In addition, extending $L_0$ in the real world may be expensive or even impossible due to the road geometry and conditions. Therefore, it is essential to determine $L_0$ that achieves a good balance between the closed-loop performance, the cost and the difficulty of implementation. In this sense, the computed lower bound from (\ref{eq:bound_L0}) serves as a valuable design tool.

\begin{table*}[h]
\begin{center}
\caption{Evaluations of Two VSL Schemes} \label{tb:Eval_2VSL}
\begin{tabular}{cccccc}
Demands & Control & ATT (min) & $\bar s$ & CO$_2$ (g/veh/km) & $e_\rho$   \\ \hline
7000 veh/h & No Control & 22.4 & 43.8 & 395.1 & 240.9\%                         \\
7000 veh/h & Proposed VSL & 18.9 & 4.6 & 318.2 & 13.1\%                      \\
7000 veh/h & FBL VSL & 19.2 & 4.9 & 320.8 & 23.9\%                       \\ 
5500 veh/h & No Control & 16.6 & 23.6 & 332.9 & 235.1\%                      \\
5500 veh/h & Proposed VSL & 17.8 & 13.6 & 331.8 & 23.8\%                  \\
5500 veh/h & FBL VSL & 18.3 & 12.1 & 325.1 & 20.0\%                       \\
\hline
\end{tabular}
\end{center}
\end{table*}

\subsection{Proposed VSL vs. Feedback-Linearization VSL}
In this section, we compare the performance of the proposed VSL controller with the feedback-linearization (FBL) VSL controller presented in \cite{zhang2018stability}. The LC control introduced in section \ref{subsec:LCC} is incorporated with both VSL controllers to enhance the bottleneck throughput and reduce the capacity drop. We consider two demand levels of 7000 veh/h and 5500 veh/h, and set $L_0$ to be 4.8 km and 1.6 km respectively. The evaluation results are presented in Table \ref{tb:Eval_2VSL}. The proposed VSL performs significantly better than FBL VSL in terms of the RRMSE in densities in high demand scenarios. The two VSL controllers deliver close performance under moderate traffic demand. 

The key idea of the proposed VSL scheme is to concentrate all the control efforts into the upstream VSL zone and minimize the downstream speed variations, while the FBL VSL distributes part of the control efforts into the downstream sections and allows some speed difference. The results in table \ref{tb:Eval_2VSL} indicate that concentrating the control in the upstream is better than distributing it all over in high demand scenarios as the traffic moves more consistently using the former method. Moreover, the proposed VSL is easier to implement and requires less computation. The performance is not affected by the uncertainties in the measurements. However, the FBL VSL may outperform the proposed VSL in the following situations: 
\begin{itemize}
    \item The road configurations and initial traffic conditions vary among the downstream sections. 
    \item There exists ramp input in the downstream sections.
    \item The available space for $L_0$ is less than desired.
\end{itemize}

%%%%%%%%%%%%%%%%%%%%%%%%%%%%%%%%%%%%%%%%%%%%%%%%%%%%%%%%%%%%%%%%%%%%%%%%%%%%%

\section{Conclusion} \label{section:Conclusion}
In this paper, we proposed a rule-based VSL strategy that treats the distance of the most upstream VSL zone ($L_0$) as a control variable, and analyzed its impact on the performance of the closed-loop system. We obtained a lower bound that $L_0$ needs to satisfy for faster convergence of the VSL control and better performance. Our analysis demonstrated that the obtained lower bound is positively correlated with the upstream VSL command, the initial densities, and the downstream length. Based on the microscopic simulation results, the density of each CTM section reaches steady state when $L_0$ satisfies the lower bound. In addition, significant benefits in terms of the number of stops and the emission rates of CO$_2$ are observed when the value of $L_0$ is close to or greater than the lower bound. However, overextending $L_0$ leads to undesirable travel time. The computed lower bound serves as a valuable design parameter for selecting proper $L_0$ in various traffic scenarios. We also compared the proposed VSL controller with a classic feedback-linearization VSL controller. The former outperforms the latter in high-demand traffic flow scenarios, reflecting the benefit of concentrating the control efforts and minimizing the speed variations.

In the future, we are interested in coordinating the proposed integrated control scheme with a ramp metering (RM) control design to alleviate bottleneck congestion for a multi-section freeway network with on-ramps and off-ramps. 

%\bibliographystyle{IEEEtran}

%\bibliography{IEEEabrv,References}

% Generated by IEEEtran.bst, version: 1.14 (2015/08/26)

% biography section
% 
% If you have an EPS/PDF photo (graphicx package needed) extra braces are
% needed around the contents of the optional argument to biography to prevent
% the LaTeX parser from getting confused when it sees the complicated
% \includegraphics command within an optional argument. (You could create
% your own custom macro containing the \includegraphics command to make things
% simpler here.)
%\begin{IEEEbiography}[{\includegraphics[width=1in,height=1.25in,clip,keepaspectratio]{mshell}}]{Michael Shell}
% or if you just want to reserve a space for a photo:

\begin{IEEEbiography}[{\includegraphics[width=1in,height=1.25in,clip,keepaspectratio]{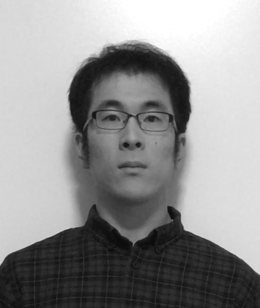}}]{Tianchen Yuan}
received the B.Sc. degree with distinction and the M.S. degree both in Electrical Engineering from University of Minnesota, Minneapolis, USA, in 2015 and 2017, respectively. He is currently working toward the Ph.D. degree with the Center of Advanced Transportation Technology, University of Southern California, Los Angeles, CA, USA. His research topics involve intelligent transportation systems, traffic flow control and optimizations. 
\end{IEEEbiography}

\begin{IEEEbiography}[{\includegraphics[width=1in,height=1.25in,clip,keepaspectratio]{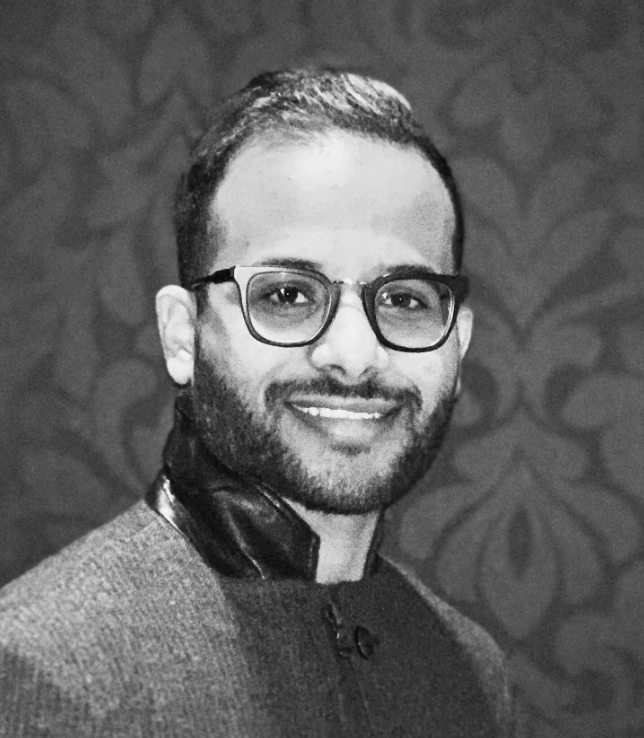}}]{Faisal Alasiri}
earned his B.S. degree with second-class honors from Umm Al-Qura University, Makkah, KSA (2008), his M.S. degree from The George Washington University, Washington, D.C., USA (2013), and his Ph.D. degree from University of Southern California, Los Angeles, USA (2021), all degrees  in Electrical Engineering. Mr. Alasiri’s research interests include traffic flow control systems, which involve macroscopic and microscopic traffic flow modeling and simulation applied to Intelligent Transportation Systems. 
\end{IEEEbiography}

% insert where needed to balance the two columns on the last page with
% biographies
%\newpage

\begin{IEEEbiography}[{\includegraphics[width=1in,height=1.25in,clip,keepaspectratio]{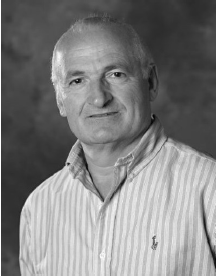}}]{Petros Ioannou}
received the B.Sc. degree with First Class Honors from University College, London, England, in 1978 and the M.S. and Ph.D. degrees from the University of Illinois, Urbana, Illinois, in 1980 and 1982, respectively. In 1982, Dr. Ioannou joined the Department of Electrical Engineering-Systems, University of Southern California, Los Angeles, California. He is currently a Professor and holder of the AV 'Bal' Balakrishnan chair in the same Department and the Director of the Center of Advanced Transportation Technologies and Associate Director for Research of METRANS. He also holds a courtesy appointment with the Departments of Aerospace, Mechanical Engineering and Industrial System Engineering. His research interests are in the areas of adaptive control, neural networks, nonlinear systems, vehicle dynamics and control, intelligent transportation systems and marine transportation. He is Fellow of IEEE, IFAC, AAAS and IET and author/coauthor of 9 books and over 400 papers in the areas of control, dynamics and Intelligent Transportation Systems.
\end{IEEEbiography}

% You can push biographies down or up by placing
% a \vfill before or after them. The appropriate
% use of \vfill depends on what kind of text is
% on the last page and whether or not the columns
% are being equalized.

%\vfill

% Can be used to pull up biographies so that the bottom of the last one
% is flush with the other column.
%\enlargethispage{-5in}

% that's all folks
\end{document}